\newcommand{\rvY}{\texttt{Y}}
\definecolor{green1}{rgb}{0,0.5,0}
\definecolor{magenta}{rgb}{1.0, 0.11, 0.81}
\definecolor{mulberry}{rgb}{0.77, 0.29, 0.55}
\definecolor{xgray}{rgb}{0.9, 0.9, 0.9}
\def \blue{\color{blue}}
\def \bes{\begin{equation*}}
\def \ees{\end{equation*}}
\def \bas{\begin{align*}}
\def \eas{\end{align*}}
\def \be{\begin{equation}}
\def \ee{\end{equation}}
\def \bbm{\begin{bmatrix}}
\def \ebm{\end{bmatrix}}
\def \rvA{\texttt{A}}
\def \rvB{\texttt{B}}
\def \rvR {\texttt{R}}
\def \rvT{\texttt{T}}
\def \rvX{\texttt{X}}
\def \rvY{\texttt{Y}}
\def \F{\mathbb{F}}
\newcommand{\Aij}{\rvA_{\{i,j\}}}
\newcommand{\Rij}{\rvR_{\{i,j\}}}
\newcommand{\Xij}[2]{\rvX_{\{#1,#2\}}}
\newcommand{\cZ}{{\cal Z}}
\newcommand{\bfx}{{\boldsymbol x}}
\newcommand{\bfy}{{\boldsymbol y}}
\newcommand{\bfA}{{\mathbf A}}
\newcommand{\bfB}{{\mathbf B}}
\newcommand{\bfR}{{\mathbf R}}
\newcommand{\bfT}{{\mathbf T}}
\newcommand{\bfX}{{\mathbf X}}
\newcommand{\bfAij}{\bfA_{\{i,j\}}}
\newcommand{\bfRij}{\bfR_{\{i,j\}}}
\newcommand{\bfXij}{\bfX_{\{i,j\}}}
\newcommand{\I}{\textrm{I}}
\newcommand{\mutinf}{\I_q}
\newcommand{\entropy}{\textrm{H}_q}
\newcommand{\leakage}[1]{\textrm{L}_#1(p)}
\newcommand{\leakagep}[1]{\textrm{L}_#1(p_{z0},p_{nz0})}
\newcommand{\sparsity}{\textrm{S}}
\newcommand{\pz}{p_{z0}}
\newcommand{\pnz}{p_{nz0}}
\newcommand{\master}{chief}
\newcommand{\worker}{worker}
\newcommand{\Worker}{Worker}
\newcommand{\workers}{workers}
\newtheorem{theorem}{Theorem}
\newtheorem{lemma}{Lemma}
\newtheorem{observation}{Observation}
\newtheorem{remark}{Remark}
\tikzset{brace/.style={decorate, decoration={brace}},
 brace mirrored/.style={decorate, decoration={brace,mirror}},
}
\newcounter{brace}
\newcommand{\transpose}{\mathsf{T}}
\begin{document}

\title{{Distributed Matrix-Vector Multiplication with Sparsity and Privacy Guarantees}}
\vspace{-0.4cm}
\author{\IEEEauthorblockN{
Marvin Xhemrishi, Rawad Bitar, and Antonia Wachter-Zeh }\\ 
\vspace{-0.48cm}
\IEEEauthorblockA{Institute for Communications Engineering, Technical University of Munich, Munich, Germany\\
\{\texttt{marvin.xhemrishi, rawad.bitar, antonia.wachter-zeh}\}\texttt{@tum.de}}
\thanks{M.~Xhemrishi's work was funded by the DFG (German Research Foundation) project under Grant Agreement No. WA 3907/7-1. {This work was partly supported by the Technical University of Munich - Institute for Advanced Studies, funded by the German Excellence Initiative
and European Union Seventh Framework Programme under Grant Agreement
No. 291763.}
}
\vspace{-1cm}
}
\vspace{-1cm}

\maketitle
\begin{abstract}
 We consider the problem of designing a coding scheme that allows both sparsity and privacy for distributed matrix-vector multiplication. Perfect information-theoretic privacy requires encoding the input sparse matrices into matrices distributed uniformly at random from the considered alphabet; thus destroying the sparsity. Computing matrix-vector multiplication for sparse matrices is known to be fast. Distributing the computation over the non-sparse encoded matrices maintains privacy, but introduces artificial computing delays. In this work, we relax the privacy constraint and show that a certain level of sparsity can be maintained in the encoded matrices. We consider the {\master}/{\worker} setting while assuming the presence of two clusters of {\workers}: one is completely untrusted in which all {\workers} collude to eavesdrop on the input matrix and in which perfect privacy must be satisfied; in the partly trusted cluster, only up to $z$ {\workers} may collude and to which revealing small amount of information about the input matrix is allowed. We design a scheme that trades sparsity for privacy while achieving the desired constraints. We use cyclic task assignments of the encoded matrices to tolerate partial and full stragglers.
\end{abstract}

\section{Introduction}\label{sec:Intro}
With the emergence of machine learning applications, the necessity of performing intensive computations is increasing. In several applications, performing the intensive computations on a single processing node is computationally infeasible. Distributed computing arose as a ubiquitous solution. However, distributing the computation comes at the expense of privacy and latency challenges. 

We focus on the {\master}/{\worker} setting in which a main computational node called {\master} wants to run intensive computations on its data. The {\master} divides the intensive computation task into smaller tasks assigned to computation nodes called {\workers}. Waiting for all {\workers} is prone to the presence of \emph{stragglers}, i.e., slow or unresponsive {\workers}~\cite{Deanetal}. Coding-theoretic techniques were proposed as a promising solution to mitigate the effect of straggler, thus speeding up the overall distributed computation \cite{speeding_up_using_codes}. In several applications, the computation is run on sensitive and private data, e.g., medical records and genomes. Leaking information about such data may violate privacy policies and potentially harm the owner of the data \cite{simple_demographics}. Therefore, when distributing the computational tasks to untrusted {\workers}, extra care must be taken to preserve the privacy of the data.

Matrix-vector multiplication is a key computation of many machine learning algorithms, such as principal component analysis, support vector machines and other gradient-descent based algorithms~\cite{suykens1999least,seber2012linear}. Codes mitigating stragglers in distributed matrix-vector multiplication, e.g., \cite{speeding_up_using_codes,Albin,GauriJoshi,Convolutional_codes,li2020coded} destroy the underlying structure of the input matrices. Of particular importance are applications involving the multiplication of \emph{sparse} matrices~\cite{wright2008robust}, i.e., matrices that have a relatively small number of non-zero entries. Sparse matrices are efficiently stored and allow fast and efficient computations \cite{adaptive_sparse_matrix,implementing_sparse_matrix_vector}. Destroying the sparsity of the input matrices may incur artificial delays at the computing nodes. Hence, coding-theoretic techniques maintaining the sparsity of the underlying matrices and mitigating the stragglers are investigated~\cite{coded_sparse_mm,coded_sparse_matrix_leverages_partial_stragglers}. %

We consider information-theoretic privacy, i.e., the eavesdropper has unbounded computational power. When information-theoretic privacy is required, most of the works consider perfect information-theoretic privacy, i.e., no information about the input data is leaked to the eavesdropper, e.g., \cite{Staircase, PRAC, RPM3,Kakar, Burak_private}. In order to achieve perfect privacy, the matrices sent as computational tasks to the {\workers} are padded (mixed) with random matrices generated uniformly at random from a given alphabet. The random matrices are required to be generated independently and uniformly at random to achieve perfect privacy. However, such matrices have a dense structure, i.e., the number of non-zero elements is relatively high, which destroys the sparsity of the computational tasks and increases the time needed to finish the distributed computation, see for example the analysis in \cite{coded_sparse_mm}. %

In this work, we focus on designing a sparse and private matrix-vector multiplication scheme. We relax the perfect privacy constraint. The random matrices are then not required to be generated uniformly at random from the desired alphabet. %

\textit{Related work:} Using codes for straggler mitigation in distributed matrix-vector multiplication obtained a significant interest from the scientific community, e.g., \cite{Bivarite_non_private, Anton_Alex, Albin, Convolutional_codes, FLT_codes, GauriJoshi, MatDot}. The works in \cite{PRAC, Staircase}, consider one-sided privacy, i.e., the matrix-vector multiplication setting in which the input matrix must remain private and the vector can be revealed to the {\workers}. On the other hand, the works in \cite{RPM3, GASP,Kakar, Burak_private, On_the_capacity, batch,LCC} consider double-sided privacy, i.e., the setting of matrix-matrix multiplication in which both input matrices must be kept private. The works of \cite{coded_sparse_mm} and \cite{coded_sparse_matrix_leverages_partial_stragglers} are among the first that consider a sparsity-preserving coded computing scenario. In \cite{coded_sparse_mm}, the authors consider a matrix-matrix multiplication where both input matrices are sparse. The matrices are encoded using Fountain codes \cite{LT_codes} with a  custom-made degree distribution to ensure sparsity. The authors of \cite{coded_sparse_matrix_leverages_partial_stragglers} consider, among other scenarios, a distributed matrix-vector multiplication setting in which the input data is partitioned into smaller matrices. Those matrices are then distributed to the processing units using a {fractional repetition code} to mitigate stragglers. This technique ensures sparsity of the assigned tasks and tolerates stragglers. To increase the straggler tolerance, the authors propose an additional layer of non-sparse coded matrices distributed to the {\workers}. The authors in \cite{Lagrange_sparse_coding} consider a \emph{sparsification} technique of the input data to leverage the sparsity properties of the distributed tasks. They also adapt their scheme to ensure {perfect} privacy against colluding {\workers}. %
The work in \cite{Matrix_sparsification} considers the case where a dense input matrix is sparsified to speed up the computation. %

\textit{Contributions and organization:} %
We are interested in preserving both, privacy and sparsity, of the input matrix. Insisting on perfect privacy does not allow the creation of sparse tasks. %
We open the door to sparse and private coded computing by relaxing the privacy requirement. 
We set the notation and formally explain the system model in Section~\ref{sec:Preliminaries}. In Section~\ref{sec:Schemes_trading_privacy_for_sparsity} we introduce the trade-off between sparsity and privacy through a coding strategy and optimize this trade-off for the setting at hand. We combine, in Section~\ref{sec:coded_comp_scheme}, our strategy with a task distribution scheme to create a coded computing scheme that tolerates stragglers and trades sparsity for privacy.%

\section{Preliminaries}\label{sec:Preliminaries}
\emph{Notation:} We denote matrices and vectors by uppercase and lowercase bold letters, e.g., $\bfX$ and $\bfx$, respectively. The $(i,j)$-th entry of a matrix $\bfX$ is denoted by $\bfXij$. Random variables are denoted by uppercase \emph{typewriter} letters, e.g., $\rvY$. The random variables representing a matrix $\bfX$ and its $(i,j)$-th entry $\bfXij$ are denoted respectively by $\rvX$ and $\Xij{i}{j}$. A finite field of cardinality $q$ is denoted by $\F_q$ and its multiplicative group is denoted by $\F_q^*$, i.e., $\F_q^* = \F_q\setminus \{0\}$. Sets are denoted by calligraphic letters, e.g., $\mathcal{X}$. For a positive integer $b$, the set $\{1,2,\dots, b\}$ is denoted by $[b]$. Given $b$ random variables $\rvY_1,\dots,\rvY_b$ and a set $\mathcal{I}\subseteq [b]$, the set $\{\rvY_i\}_{i\in\mathcal{I}}$ contains the random variables indexed by $\mathcal{I}$, i.e., $\{\rvY_i\}_{i\in\mathcal{I}} \triangleq \{\rvY_{i} | i\in \mathcal{I}\}$. The $q$-ary entropy of a random variable $\rvX \in \mathbb{F}_q$ is denoted by $\textrm{H}_q(\rvX)$ and $\textrm{H}_q(\left[p_1, p_2, \dots, p_q \right])$ interchangeably, where $\left[p_1, p_2, \dots, p_q \right]$ is the probability mass function (PMF) of $\rvX$ over $\mathbb{F}_q$, i.e., $\Pr(\rvX = i) = p_i$ for all $i\in \mathbb{F}_q$. The $q$-ary mutual information between two random variables $\rvX$ and $\rvY$ is denoted by $\I_q(\rvX; \rvY)$. 
The indicator function $\mathbbm{1}_{\text{condition}}$ is one if ``condition" is true, and zero otherwise.
We define the sparsity of a matrix as follows.

{\definition(The sparsity level of a matrix $\bfX$) {A matrix $\bfX$ whose entries are independently and identically distributed (i.i.d.) has a sparsity level $\sparsity(\bfX)$ equal to the probability of its $(i,j)$-th entry being equal to $0$, i.e., $$ \sparsity(\bfX) = \Pr\{\Xij{i}{j} = 0\}$$}}

\emph{System model:} We consider the scenario where a {\master} node owns a private large sparse matrix $\bfA$ and a public vector $\bfx$. The matrix $\bfA \in \F_q^{m \times n}$ has a sparsity level S$(\bfA) = s > q^{-1}$. We assume that the entries of the matrix $\bfA$ are i.i.d with
$\Pr\{\Aij = 0\} = s$ and $\Pr\{\Aij = a\} = \frac{1-s}{q-1}$ where $a \in \F_q^*$. The vector $\bfx \in \F_q^{n \times k}$  is assumed to be uniformly distributed over $\F_q$, i.e., S$(\bfx) = q^{-1}$. The {\master} is interested in computing $\bfy = \bfA \bfx$. %
The {\master} distributes the computation to external nodes (referred to as {\workers}) that are hired from two different \emph{non-communicating} clusters. The {\workers} of the clusters have the following properties: 
\begin{enumerate}
    \item \emph{Untrusted cluster:} This cluster consists of $N_1$ {\workers}, $w_i^u$, for $i = 1, \dots, N_1$, that are fully untrusted. No leakage of information about $\bfA$ to those {\workers} is tolerated, i.e., perfect information-theoretic privacy is required here. {The only guarantee that the {\master} has from this cluster is that the {\workers} are honest but curious.} %
    \item \emph{Partly trusted cluster:} This cluster consists of $N_2$ {\workers}, $w_i^t$, for $i = 1, \dots, N_2$. The {\workers} of this cluster are honest but curious. However, a known limit of the {\workers} {(up to $z$)} collude to eavesdrop on the data of the {\master}. In addition, leaking a small amount of information about the matrix $\bfA$ to those {\workers} is tolerated.
    \item \emph{Stragglers:} The {\workers} may be assigned several computational tasks. \emph{Full} stragglers are unresponsive {\workers}. In contrast, \emph{partial} stragglers return part of their computational tasks to the {\master}.
\end{enumerate}

Our privacy measure is information-theoretic privacy. Given the random variables $\rvA$ and $\rvB$, we say that observing a realization $\bfB$ of $\rvB$ leaks $\varepsilon \triangleq \mutinf(\rvA;\rvB)$ information about $\rvA$. If the leakage $\varepsilon$ is zero, we say that perfect privacy is attained.

\section{Trading Privacy for Sparsity} \label{sec:Schemes_trading_privacy_for_sparsity}
We provide {a} coding strategy %
that encodes a sparse matrix $\bfA$, with $s\triangleq \sparsity(\bfA) >q^{-1}$, into two matrices $\bfB_1$ and $\bfB_2$ such that: \begin{enumerate*}[label={\textit{\roman*)}}] \item $\bfB_1$ and $\bfB_2$ have a desired sparsity level $s_1$ and $s_2$, $q^{-1}<s_1,s_2\leq s$; \item $\bfB_1$ and $\bfB_2$ leak a limited amount of information $\varepsilon$ about $\bfA$, i.e., $\mutinf(\rvA;\rvB_1)+\mutinf(\rvA;\rvB_2) \leq \varepsilon$; and \item $\bfA$ can be decoded from $\bfB_1$ and $\bfB_2$. %
\end{enumerate*}
{Our strategy provides a trade-off between the sparsity levels of the encoded tasks $s_1,s_2$ and the overall leakage $\varepsilon$. We base our scheme on Shannon's one-time pad~\cite{shannon_one_time_pad}, where the {\master} generates a random matrix $\bfR$ that is as big as $\bfA$. The encoded matrices are $\bfB_1 = \bfR$ and $\bfB_2 = \bfA+\bfR$, where the addition is entry-wise. We restrict our attention to the setting where $\mutinf(\rvA;\rvB_2) = 0$, and generalize our results in a future work.}

\emph{Dependent sparse one-time pad:} If $\bfR$ is generated uniformly at random from $\F_q$, then perfect privacy is achieved, but no sparsity in $\bfR$ and $\bfA+\bfR$ is maintained. To overcome the issue of obtaining dense matrices, we allow the random matrix $\bfR$ to be generated \emph{dependently} from the input private matrix $\bfA$. The main idea is to design a probability distribution to generate $\bfR$ such that good sparsity levels and low leakage are guaranteed. %
Since the entries $\rvA_{i,j}$ of $\rvA$ are assumed to be independent, we treat the entries of $\rvR$ independently. We define two conditional PMFs for the generation of $\bfRij$ given as follows.%
\begin{align}
 \label{eq:dependent_on_0}
    \Pr\{\Rij = r \lvert \Aij = 0\} &= \begin{cases} 
    \pz, &r = 0 \\   \dfrac{1-\pz}{q-1}, &r \neq 0,
    \end{cases}\\
\label{eq:dependent_on_nz}
    \Pr\{\Rij = r \lvert \Aij = a\} &= \begin{cases} 
    \pnz, &r = -a \\   \dfrac{1-\pnz}{q-1}, &r \neq -a.
    \end{cases}
\end{align}
where $r\in \F_q$, $a \in \F_q^*$, $-a$ is the additive inverse of $a$ in $\F_q^*$ and $\pz$ and $\pnz$ are non-negative numbers smaller than $1$. This strategy allows the \emph{padded matrix} $\bfA + \bfR$ to inherit some zero entries from $\bfA$ unless $\pz = 0$, see~\eqref{eq:dependent_on_0}. Moreover, the padded matrix will have some zero entries in the positions where the matrix $\bfA$ has non-zero entries for the case where $\pnz \neq 0$, see~\eqref{eq:dependent_on_nz}. The results are formally stated next.
\begin{lemma}\label{lemma:sparsity}
Given an input matrix $\bfA$ with a sparsity level $\sparsity(\bfA) = s$ and a random matrix $\bfR$ generated as shown in~\eqref{eq:dependent_on_0} and~\eqref{eq:dependent_on_nz}, the sparsity level of the padded matrix and the random matrix are given by $\sparsity\left(\bfA + \bfR\right)=(\pz - \pnz)s + \pnz$ and $\sparsity(\bfR) = \pz s + (1-\pnz)\frac{(1-s)}{q-1}$. The leakage about the input matrix is quantified by $\mutinf(\rvA+\rvR;\rvA) = m n \leakagep{1}$ and $\mutinf(\rvR;\rvA) = m n  \leakagep{2}$, where $\leakagep{1}$ and $\leakagep{2}$ are given in~\eqref{eq:final_res} and~\eqref{eq:leakage_from_R}, respectively.
\end{lemma}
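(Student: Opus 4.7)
All four claims reduce to a single-entry computation. The construction in \eqref{eq:dependent_on_0}--\eqref{eq:dependent_on_nz} makes $\Rij$ depend only on $\Aij$, and the entries $\Aij$ are i.i.d.; hence the pairs $\{(\Aij,\Rij)\}_{i,j}$ are i.i.d., which gives the claimed $mn$ factor on both mutual informations and lets us work with a single $(i,j)$ throughout.

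For the two sparsity identities I would condition on $\Aij$ and apply the law of total probability. The sparsity of $\bfA+\bfR$ equals $\Pr\{\Rij=-\Aij\}$: from $\Aij=0$ we collect $s\cdot\pz$, and from each $a\in\F_q^*$ we collect $\tfrac{1-s}{q-1}\cdot\pnz$; summing over the $q-1$ nonzero symbols yields $(\pz-\pnz)s+\pnz$. The sparsity of $\bfR$ is $\Pr\{\Rij=0\}$; here the contribution from $\Aij=a\neq 0$ falls in the ``$r\neq -a$'' branch of \eqref{eq:dependent_on_nz}, giving $\tfrac{1-\pnz}{q-1}\cdot\tfrac{1-s}{q-1}$ per such $a$ and $\pz s + (1-\pnz)\tfrac{1-s}{q-1}$ in total.

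For the two leakages I would use $\mutinf(\rvX;\rvA)=\entropy(\rvX)-\entropy(\rvX\mid\rvA)$ at the per-entry level. The conditional entropies are immediate from \eqref{eq:dependent_on_0}--\eqref{eq:dependent_on_nz}: conditioning on $\Aij$, both $\Rij$ and $\Aij+\Rij$ are supported on $\F_q$ with a single ``heavy'' atom of mass $\pz$ or $\pnz$ and $q-1$ equal remaining atoms, so their conditional entropies are identical and average (over $\Aij$) to $s\,\entropy([\pz,\tfrac{1-\pz}{q-1},\ldots]) + (1-s)\,\entropy([\pnz,\tfrac{1-\pnz}{q-1},\ldots])$. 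It remains to evaluate $\entropy(\Aij+\Rij)$ and $\entropy(\Rij)$ from their marginal PMFs.

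This last step is the main obstacle, and it is where the two leakages diverge. For $\Aij+\Rij$ the marginal is easy: the mass at $0$ is the sparsity already computed, and a symmetry argument (for $v\neq 0$ the condition $v-a=-a$ fails for every $a$) shows the remaining $q-1$ symbols are equiprobable. For $\Rij$ the marginal also turns out to be uniform on $\F_q^*$, but this requires a more careful count: for each fixed $r\neq 0$, the unique choice $a=-r$ contributes mass $\pnz$ while the other $q-2$ nonzero choices each contribute $\tfrac{1-\pnz}{q-1}$, and the $r$-dependence cancels. Substituting these marginals into $\entropy(\cdot)$ and subtracting the conditional entropy above yields the closed forms $\leakagep{1}$ and $\leakagep{2}$ of \eqref{eq:final_res} and \eqref{eq:leakage_from_R}, and multiplying by $mn$ completes the proof.
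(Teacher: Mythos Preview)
Your proposal is correct and follows essentially the same route as the paper: reduce to a single entry via the i.i.d.\ structure, compute both sparsities by conditioning on $\Aij$ and applying total probability, and obtain each leakage as $\entropy(\cdot)-\entropy(\cdot\mid\Aij)$ with the conditional term equal to $s\,\entropy([\pz,\tfrac{1-\pz}{q-1},\ldots])+(1-s)\,\entropy([\pnz,\tfrac{1-\pnz}{q-1},\ldots])$. The only notable addition is that you explicitly verify the marginals of $\Aij+\Rij$ and $\Rij$ are uniform on $\F_q^*$, which the paper simply asserts when writing the PMFs in~\eqref{eq:final_res} and~\eqref{eq:leakage_from_R}; this is a welcome clarification rather than a different argument.
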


\begin{remark}
Analyzing the dependency of the total leakage $m n (\leakagep{1}+\leakagep{2})$ and the total sparsity $\sparsity(\bfR)+\sparsity(\bfA+\bfR)$ on $p_{z0}$ and $p_{nz0}$ allows us to understand the nature of the tradeoff between sparsity and privacy. Preliminary observations show that both those quantities are increasing in $p_{z0}$ and $p_{nz0}$ for certain regimes. In this work we study the trade-off for the case where $p_{z0} = p_{nz0} = p$ (Lemma~\ref{lemma:R}) since it fits the model requirements, cf., Observation~\ref{obs:private}. We leave the general analysis for future investigation.
\end{remark}

\begin{proof}[Proof of Lemma~\ref{lemma:sparsity}]
We first compute $\sparsity(\bfA+\bfR)$. Note that the entries of the padded matrix $\bfA + \bfR$ are independently and identically distributed (i.i.d). That holds because the entries of $\bfA$ are assumed to be i.i.d. and the entry $\bfR_{i,j}$ of $\bfR$ only depends on $\bfA_{i,j}$. This allows us to write the following
\begin{align*}
\sparsity\left(\bfA + \bfR\right) &= \Pr\{ \Aij + \Rij = 0 \} \\
                                  &= \sum_{\ell = 0}^{q-1} \Pr\{ \Aij = \ell, \Rij = -\ell \}   \\
                                  &= \sum_{\ell = 0}^{q-1} \Pr\{ \Rij = -\ell\lvert \Aij = \ell \}\Pr\{\Aij = \ell\} \\
                                  &= (\pz - \pnz)s + \pnz.
\end{align*}
To compute $\sparsity(\bfR)$, we use the total law of probability to write
\begin{align}
    \sparsity\left(\bfR\right) &= \Pr\{\Rij = 0\} \nonumber \\
      &= \sum_{\ell=0}^{q-1} \Pr\{\Rij = 0\lvert \Aij =\ell\}\Pr\{\Aij = \ell\} \nonumber \\
                               & = \pz s + (1-\pnz)\frac{(1-s)}{q-1}. \label{eq:sparsity_R}
          \vspace{-0.2cm}
\end{align}%
\vspace{-0.2cm}

We now quantify the leakage of our coding scheme. To that end we quantify the leakage $\mutinf\left(\rvA+\rvR; \rvA\right)$ and $\mutinf\left(\rvR; \rvA\right)$, respectively.
Since the entries of $\bfA +\bfR$ and $\bfR$ are i.i.d., then 
\begin{align*}
    \mutinf\left(\rvA+\rvR;\rvA\right) &= m n  \mutinf \left(\Aij + \Rij; \Aij \right) \\
                                    &\triangleq m n \leakagep{1},\\
    \mutinf\left(\rvR; \rvA\right) &= m n\mutinf\left(\Rij; \Aij\right) \\
    &\triangleq m n\leakagep{2},
\end{align*}%
for any $i\in[m]$ and $j\in[n]$. Let $\leakagep{1}$ denote the element-wise leakage from the padded matrix $\bfA + \bfR$, given by
\vspace{-.3cm}
\begin{align}
    &\leakagep{1} \nonumber \\
    &= \entropy\left(\Aij+\Rij\right) - \entropy\left(\Aij+\Rij \lvert \Aij\right) \nonumber\\
             &= \entropy\left(\Aij+\Rij\right) - \entropy\left(\Rij \lvert \Aij\right) \label{eq:conditioned} \\ 
             &= \entropy\left(\Aij+\Rij\right) \nonumber \\ 
             &\quad - \sum_{\ell = 0}^{q-1} \entropy\left(\Rij \lvert \Aij = \ell\right)\Pr\{\Aij = \ell\} \label{eq:def_cond}\\
             &= \entropy\left(\left[\sparsity(\bfA+\bfR), \frac{1 - \sparsity(\bfA+\bfR)}{q-1},\dots \right]\right) \nonumber\\ %
             &\quad-s\entropy\left(\left[\pz,\frac{1-\pz}{q-1},\dots\right]\right) \nonumber  \\  %
             &\quad-(1-s)\entropy\left(\left[\pnz,\frac{1-\pnz}{q-1},\dots\right]\right), \label{eq:final_res} %
\end{align}
where \eqref{eq:conditioned} and~\eqref{eq:def_cond} hold due to the properties of conditional entropy and \eqref{eq:final_res} is obtained by writing the entropies in terms of the PMFs of their respective random variables. Following similar steps, the element-wise leakage from the \emph{padding matrix} $\bfR$ is given by
\begin{align}
    \leakagep{2} %
    &=\entropy\left(\left[\sparsity(\bfR), \dfrac{1-\sparsity(\bfR)}{q-1},\dots\right]\right) \nonumber \\
    &\quad-s\entropy\left(\left[\pz,\frac{1-\pz}{q-1},\dots\right]\right) \nonumber  \\  %
    &\quad-(1-s)\entropy\left(\left[\pnz,\frac{1-\pnz}{q-1},\dots\right]\right) \label{eq:leakage_from_R}.
\end{align}
\end{proof}
The derivation of~\eqref{eq:final_res} leads us to the following crucial observation on which we rely to build our coding scheme.

\begin{observation}\label{obs:private}
{For the special case of $\pz = \pnz = p$, the matrix $\bfA+\bfR$ leaks no information about $\bfA$, i.e., $\mutinf({\rvA+\rvR;\rvA})= mn  \leakage{1} = 0$.}
\end{observation}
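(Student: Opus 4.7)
The plan is to substitute $\pz=\pnz=p$ into the closed-form expression \eqref{eq:final_res} for $\leakage{1}$ derived in Lemma~\ref{lemma:sparsity}, and show that the three entropy terms collapse to a single common value with coefficients summing to zero.

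First, I would compute the marginal PMF of $\Aij+\Rij$ under the restriction $\pz=\pnz=p$. From Lemma~\ref{lemma:sparsity} we already know $\sparsity(\bfA+\bfR)=(\pz-\pnz)s+\pnz=p$, so $\Pr\{\Aij+\Rij=0\}=p$. The next step, which is the crux of the argument, is to show that for every $\ell\in\F_q^*$ one also has $\Pr\{\Aij+\Rij=\ell\}=(1-p)/(q-1)$, so that the sum is uniformly distributed on $\F_q^*$ conditionally on being non-zero. By the total law of probability,
\begin{equation*}
\Pr\{\Aij+\Rij=\ell\}=\sum_{k\in\F_q}\Pr\{\Rij=\ell-k\mid\Aij=k\}\Pr\{\Aij=k\}.
\end{equation*}
For $\ell\neq 0$, a short case analysis on the conditional distributions \eqref{eq:dependent_on_0}--\eqref{eq:dependent_on_nz} shows that the event $\{\Rij=\ell-k\}$ never coincides with the atom of mass $\pz$ or $\pnz$: when $k=0$, we need $\Rij=\ell\neq 0$, and when $k\neq 0$ the atom is at $\Rij=-k$, but $\ell-k=-k$ forces $\ell=0$. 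Hence every term in the sum has conditional probability $(1-p)/(q-1)$, and summing over the marginal of $\Aij$ gives exactly $(1-p)/(q-1)$.

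Next, I would observe that the two conditional entropy terms in \eqref{eq:final_res} are identical when $\pz=\pnz=p$: both $H_q([\pz,(1-\pz)/(q-1),\dots])$ and $H_q([\pnz,(1-\pnz)/(q-1),\dots])$ reduce to $H_q([p,(1-p)/(q-1),\dots])$. Moreover, by the first step, the unconditional entropy $H_q([\sparsity(\bfA+\bfR),(1-\sparsity(\bfA+\bfR))/(q-1),\dots])$ also reduces to $H_q([p,(1-p)/(q-1),\dots])$.

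Finally, I would substitute back into \eqref{eq:final_res} to obtain
\begin{equation*}
\leakage{1}\big|_{\pz=\pnz=p}=\bigl(1-s-(1-s)\bigr)\,H_q\!\left(\left[p,\tfrac{1-p}{q-1},\dots\right]\right)=0,
\end{equation*}
and therefore $\mutinf(\rvA+\rvR;\rvA)=mn\,\leakage{1}=0$. The only non-routine step is the uniformity check in the first paragraph; once that is in hand, the rest is pure bookkeeping of the formula from Lemma~\ref{lemma:sparsity}.
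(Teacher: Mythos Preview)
Your proposal is correct and follows essentially the same route as the paper: the paper does not give a separate proof of the observation but simply notes that it follows from the derivation of \eqref{eq:final_res}, i.e., by substituting $\pz=\pnz=p$ into that formula and observing that all three entropy terms coincide. Your additional verification that the non-zero values of $\Aij+\Rij$ are uniformly distributed is exactly the step the paper takes for granted when it writes the first entropy in \eqref{eq:final_res} as $\entropy\!\left(\left[\sparsity(\bfA+\bfR),\tfrac{1-\sparsity(\bfA+\bfR)}{q-1},\dots\right]\right)$, so you are filling in a detail the paper left implicit rather than deviating from its argument.
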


The sparsity level of the padded matrix, $\sparsity(\bfA+\bfR) = p$ in this case, is pleasantly controllable with $p$ and does not impose any trade-off with privacy. There is however a trade-off between the sparsity level of $\bfR$ and its leakage about $\bfA$ as shown in Lemma~\ref{lemma:R}.%
\begin{figure*}[t]
    \centering
    \resizebox{.95\textwidth}{!}{	\begin{tikzpicture}[>=stealth', auto,
		triangle/.style = {fill=white, regular polygon, regular polygon sides=3 }]
		\definecolor{blue}{rgb}{0.19, 0.55, 0.91}
		\definecolor{plum}{rgb}{0.8, 0.6, 0.8}
		\def\blue{\color{blue}}
		\def\ogreen{\color{ogreen}}
		\def\orange{\color{orange}}
		\def\nred{\color{plum}}
		\definecolor{lightgray}{rgb}{0.83, 0.83, 0.83}
		\def\mx{0.4}
		\def\my{1}
		\tikzset{auto shift/.style={auto=right,->,
				to path={ let \p1=(\tikztostart),\p2=(\tikztotarget),
					\n1={atan2(\y2-\y1,\x2-\x1)},\n2={\n1+180}
					in ($(\tikztostart.{\n1})!1mm!270:(\tikztotarget.{\n2})$) -- 
					($(\tikztotarget.{\n2})!1mm!90:(\tikztostart.{\n1})$) \tikztonodes}}}
		\tikzstyle{matrixa} = [rectangle, rounded corners = 2mm, draw = lightgray, minimum height = 2.2cm, minimum width = 1.5 cm, fill = blue!40]
		\tikzstyle{matrixr} = [rectangle, rounded corners = 2mm, draw = lightgray, minimum height = 2.2cm, minimum width = 1.5 cm, fill = blue!20]
		
		\tikzstyle{matrixa1} = [rectangle, rounded corners = 1mm, draw = lightgray, minimum height = 0.75 cm, minimum width = 1.5 cm, fill = blue!40]
		\tikzstyle{matrixr1} = [rectangle, rounded corners = 2mm, draw = lightgray, minimum height = 0.75cm, minimum width = 1.5 cm, fill = blue!20]
				
		\tikzstyle{matrixa2} = [rectangle, rounded corners = 1mm, draw = lightgray, minimum height = 0.75cm, minimum width = 1.5 cm, fill = blue!40]
		
		\tikzstyle{matrixb} = [rounded corners = 2mm, draw = lightgray, minimum height = 1.5cm, minimum width = 0.3 cm, fill = ogreen!30]
		
		\tikzstyle{tasks1} = [rounded corners = 2mm, draw = lightgray, minimum height = 0.75cm, minimum width = 2.5 cm, fill = plum!50]
		
		\tikzstyle{tasks2} = [rounded corners = 2mm, draw = lightgray, minimum height = 0.75cm, minimum width = 2.5 cm, fill = blue!20]

		\tikzstyle{server} = [fill=black!10, rectangle, rounded corners=4mm, draw,minimum width=2em, minimum height=2.5em]
		\node[inner sep=0] (s1) at (-1,0) %
		{};

		\node[inner sep=0pt, below left = 0.6 and 6 of s1, font=\footnotesize] (w1){};%
		\node[below =-0.3 of w1] (ww'2){};%
		\node[inner sep=0pt, left = 2 of w1, font=\footnotesize] (w'1){};%
		\node[below =-0.3 of w'1] (ww'1){};%
		\node[inner sep=0pt, right = 3cm of w1, font=\footnotesize] (w2){};%
		\node[below =-0.3 of w2] (ww'N){};%
		\node[inner sep=0pt, right = 3cm of w2, font=\footnotesize] (w3){};%
		\node[below =-0.3 of w3] (ww1){};%
		\node[inner sep=0pt, right = 2cm of w3, font=\footnotesize] (w4){};%
		\node[below =-0.3 of w4] (ww2){};%
		\node[inner sep=0pt, right = 3cm of w4, font=\footnotesize] (w5){};%
		\node[below =-0.3 of w5] (wwN){};%
		\node[tasks1, below left = 0 and 0 of ww'1] (A1+R1) {\footnotesize$\bfT_{1, 1}^u, \bfx$};
		\node[tasks1, right = 0.2 of A1+R1] (A2+R2) {\footnotesize$\bfT_{2,1}^u, \bfx$};
		\node[tasks1, right = 0.7 of A2+R2] (AN+RN) {\footnotesize$\bfT_{N_1, 1}^u, \bfx$};
		\node[tasks1, below = 0.1 of A1+R1] (A'1+R'1) {\footnotesize$\bfT_{1,2}^u, \bfx$};
		\node[tasks1, below = 0.1 of A2+R2] (A'2+R'2) {\footnotesize$\bfT_{2,2}^u, \bfx$};
		\node[tasks1, below = 0.1 of AN+RN] (A'N+R'N) {\footnotesize$\bfT_{N_1,2}^u, \bfx$};
		\node[tasks1, below = 0.4 of A'1+R'1] (Al+Rl) {\footnotesize$\bfT^u_{1, \alpha'}, \bfx$};
		\node[tasks1, below = 0.4 of A'2+R'2] (A'l+R'l) {\footnotesize$\bfT^u_{2, \alpha'}, \bfx$};
		\node[tasks1, below = 0.4 of A'N+R'N] (A''l+R''l) {\footnotesize$\bfT^u_{N_1, \alpha'}, \bfx$};
		\node[above = 2 of Al+Rl] (W'1) {\footnotesize {\Worker} $w^u_1$};
		\node[above = 2 of A'l+R'l] (W'2) {\footnotesize {\Worker} $w^u_2$};
		\node[above = 2 of A''l+R''l] (W'N) {\footnotesize {\Worker} $w^u_{N_1}$};
		\node[tasks2, right = 1 of AN+RN] (R1) {\footnotesize$\bfT^t_{1, 1}, \bfx$};
		\node[tasks2, right = 0.2 of R1] (R2) {\footnotesize$\bfT^t_{2, 1}, \bfx$};
		\node[tasks2, right = 0.7 of R2] (RN) {\footnotesize$\bfT^t_{N_2, 1}, \bfx$};
		\node[tasks2, below = 0.1 of R1] (R'1) {\footnotesize$\bfT^t_{1, 2}, \bfx$};
		\node[tasks2, below = 0.1 of R2] (R'2) {\footnotesize$\bfT^t_{2, 2}, \bfx$};
		\node[tasks2, below = 0.1 of RN] (R'N) {\footnotesize$\bfT^t_{N_2, 2}, \bfx$};
		\node[tasks2, below = 0.4 of R'1] (R''1) {\footnotesize$\bfT^t_{1, \alpha}, \bfx$};				
		\node[tasks2, below = 0.4 of R'2] (R''2) {\footnotesize$\bfT^t_{2, \alpha}, \bfx$};
		\node[tasks2, below = 0.4 of R'N] (R''N) {\footnotesize$\bfT^t_{N_2, \alpha}, \bfx$};
		\node[above = 2 of R''1] (W1) {\footnotesize {\Worker} $w^t_1$};
		\node[above = 2 of R''2] (W2) {\footnotesize {\Worker} $w^t_2$};
		\node[above = 2 of R''N] (WN) {\footnotesize {\Worker} $w^t_{N_2}$};
		
		\node[thick,above right = 0.8 and 0.75 of w1](pikat1){};%
		\node[thick,above right = 0.8 and 2 of w4](pikat'1){};%
		\node[thick,below = 1.1 of pikat1](pikat2) {$\dots$};
		\node[thick,below = 0.5 of pikat2](pikat3) {$\dots$};
		\node[thick,below = 0.7 of pikat3](pikat4) {$\dots$};
		\node[thick,below = 1.1 of pikat'1](pikat'2) {$\dots$};
		\node[thick,below = 0.5 of pikat'2](pikat'3) {$\dots$};
		\node[thick,below = 0.9 of pikat'3](pikat'4) {$\dots$};
		\draw[ultra thick] (-2.8,0.5) -- (-2.8, -3.7);
		\node[above left = 0.8 and 0 of w1](cluster1) {\footnotesize Untrusted cluster};
		\node[above right = 0.8 and -0.2 of w4](cluster2) {\footnotesize Partly trusted cluster};
		
		\node[left = 0 of A1+R1]{\footnotesize Layer $1$};
		\node[left = 0 of A'1+R'1]{\footnotesize Layer $2$};
		\node[left = 0 of Al+Rl]{\footnotesize Layer $\alpha'$};
		
		\node[right = 0 of R''N]{\footnotesize Layer $\alpha$};
        \node[right = 0 of R'N]{\footnotesize Layer $2$};
        \node[right = 0 of RN]{\footnotesize Layer $1$};
        
		\end{tikzpicture}}
    \caption{A depiction of our coded computing scheme. The two non-communicating clusters, namely the untrusted and partly trusted one are illustrated on the left and right hand side, respectively. The {\workers} of the untrusted cluster $w_1^u, w_2^u, \dots, w_{N_1}^u$ get each $\alpha'$ tasks (created as in Section~\ref{subsec:colluding}) that leak nothing about the input matrix. On the other hand, each {\worker} of the partly trusted cluster $w_1^t, w_2^t, \dots, w_{N_2}^t$ gets $\alpha$ tasks that leak some information about the input matrix. Every {\worker} has to multiply the designated matrices with the public vector $\bfx$ and then sent each computation back to the {\master}.}
    \label{fig:fractional_repetition_sparse}
\end{figure*}
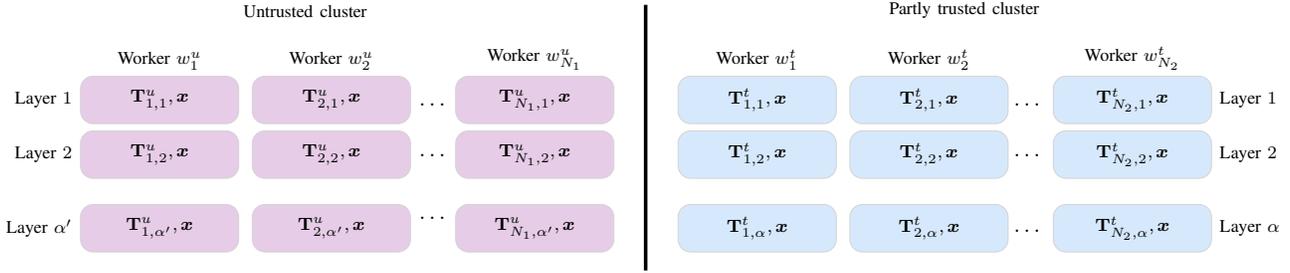

\begin{lemma}\label{lemma:R}
For the case when $\pz=\pnz=p$, if the padding matrix $\bfR$ is generated as in~\eqref{eq:dependent_on_0} and~\eqref{eq:dependent_on_nz}, then the increase of $p$ yields a higher sparsity level of $\bfR$, but it also increases its leakage about $\bfA$. 
\end{lemma}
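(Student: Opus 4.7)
The plan is to establish both monotonicity claims by direct differentiation with respect to $p$; the substitution $\pz = \pnz = p$ makes the relevant expressions collapse into a tractable form.

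\emph{Sparsity of $\bfR$.} Substituting $\pz = \pnz = p$ into~\eqref{eq:sparsity_R} gives the affine function $\sparsity(\bfR) = ap + b$ of $p$, with $a \triangleq (qs-1)/(q-1)$ and $b \triangleq (1-s)/(q-1)$. Because $s > q^{-1}$ by assumption, $a > 0$; hence $\sparsity(\bfR)$ is strictly increasing in $p$.

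\emph{Leakage $\leakage{2}$: setup.} Under the same substitution, the two conditional-entropy terms in~\eqref{eq:leakage_from_R} share the same argument and their coefficients $s$ and $1-s$ sum to $1$, so they collapse into a single term. Writing $\overline h(x)$ for the $q$-ary entropy of the one-peaked distribution $[x, \frac{1-x}{q-1},\dots,\frac{1-x}{q-1}]$, this yields $\leakage{2} = \overline h(ap+b) - \overline h(p)$. Differentiation together with $\overline h'(x) = \log_q\frac{1-x}{x(q-1)}$ gives $\frac{d\leakage{2}}{dp} = a\,\overline h'(ap+b) - \overline h'(p)$. The relevant regime is $\sparsity(\bfR) \geq q^{-1}$, i.e.\ $p \geq q^{-1}$, and in this regime both $\overline h'(p)$ and $\overline h'(ap+b)$ are non-positive; so positivity of the derivative reduces to the inequality $\log_q\frac{p(q-1)}{1-p} > a\log_q\frac{(ap+b)(q-1)}{1-(ap+b)}$.

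\emph{Leakage $\leakage{2}$: the main obstacle.} I expect the algebraic verification of this last inequality to be the main obstacle. My plan is to introduce the auxiliary function $v(p)$ equal to the difference of the two sides; note that $v(q^{-1})=0$, which follows from the identity $a + bq = 1$ (easily read off from the definitions of $a$ and $b$) since this forces $ap+b = q^{-1}$ at $p = q^{-1}$; and then show $v'(p) > 0$ on $(q^{-1}, 1)$. A short calculation gives $v'(p) = \frac{1}{p(1-p)} - \frac{a^2}{(ap+b)(1-ap-b)}$, and using the identity $1 - a = bq$ one can rewrite the numerator of $v'(p)$ as $b\bigl[ap(q-2) + (1-b)\bigr]$ (over a manifestly positive denominator), which is non-negative because $q \geq 2$, $b > 0$, and $b < 1$. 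This gives $v(p) > 0$ for $p > q^{-1}$, and hence strict monotonicity of $\leakage{2}$. Beyond spotting the identity $a + bq = 1$, the bookkeeping is routine.
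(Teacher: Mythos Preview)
Your proof is correct, but for the leakage claim it takes a different route from the paper (the sparsity argument is identical). The paper does not differentiate at all: it observes that the conditional law $\Pr(\Rij=r\mid\Aij=a)$ is an affine function of $p$, invokes the convexity of mutual information in the channel (Cover--Thomas, Theorem~2.7.4) to conclude that $p\mapsto \mutinf(\Rij;\Aij)$ is convex, and then notes that mutual information is nonnegative and vanishes at $p=q^{-1}$, so that point is the global minimum and monotonicity on $(q^{-1},1)$ follows at once. Your argument instead computes the first derivative $v(p)$ and then shows $v'(p)>0$ explicitly via the identity $a+bq=1$; in effect you are proving strict convexity by hand rather than citing it. The paper's approach is shorter and more conceptual; yours is self-contained (no external theorem) and yields the closed-form second derivative $b[ap(q-2)+(1-b)]$ over a positive denominator, which could be useful for sharper quantitative statements. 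One minor remark: your observation that $\overline h'(p)$ and $\overline h'(ap+b)$ are non-positive on the relevant range is not actually used---the displayed inequality is just a sign-flipped restatement of $\frac{d}{dp}\leakage{2}>0$ and holds regardless of those signs.
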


\begin{proof}
In this case, from~\eqref{eq:sparsity_R}, the sparsity level of the padding matrix is given by
\begin{align*}
    \sparsity\left(\bfR\right)  &= p\frac{(sq-1)}{q-1} + \frac{(1-s)}{q-1}.
\end{align*}
Hence, for a fixed sparsity level of $\bfA$, $s>q^{-1}$, and a fixed field size $q$, the sparsity level of $\bfR$, $\sparsity(\bfR)$, is a linear monotonously increasing function of $p$. %

We now show that the leakage $\leakage{p} = \mutinf\left(\Rij;\Aij\right)$ from the padding matrix is an increasing function of $p$ when $p>q^{-1}$. From~\cite[Theorem 2.7.4]{cover2012elements}, the leakage is a convex function {of $p$} since the PMF of $\Aij$ is fixed and the conditional PMF of $\Rij\lvert \Aij$ {can be written as a convex mixture of two conditional distributions as follows}.
\begin{align*}
    \Pr (\Rij = r|\Aij = a) = p \mathbbm{1}_{r = -a} + (1-p)\frac{1}{1-q} \mathbbm{1}_{r \neq -a}.
\end{align*}

From perfect privacy, we know that the leakage is zero when $p= q^{-1}$. Since the mutual information is positive, then $p=q^{-1}$ is the global minimum of $\mutinf\left(\Rij;\Aij\right)$. Therefore, for $p > q^{-1}$ the leakage function is a monotonously increasing function. Thus, the increase of $p$ increases the sparsity level of $\bfR$ but also increases its leakage about $\bfA$.
\end{proof} 
As a result of Observation~\ref{obs:private} and Lemma~\ref{lemma:R}, when using this scheme for $p=\pz=\pnz$, the best choice $p^*$ is the maximum value of $p$ for which {$\mutinf\left(\rvR;\rvA\right)\leq\varepsilon$, since $\mutinf\left(\rvA + \rvR;\rvA\right) = 0$.}

\section{{Coded Computing with Sparsity and Privacy Guarantees}}\label{sec:coded_comp_scheme}

In this section we combine our coding strategy with the cyclically shifted task assignment from~\cite{coded_sparse_matrix_leverages_partial_stragglers} and~\cite{el2010fractional} to obtain our coded computing scheme that is resilient to stragglers. The reason for choosing this {assignment} scheme is that it perfectly preserves the {sparsity} of the input tasks, $\bfR$ and $\bfA+\bfR$.%

\subsection{Task creation and distribution}\label{subsec:colluding}

The {\master} observes the input matrix $\bfA \in \F_q^{m \times n}$ and creates a matrix $\bfR \in \F_q^{m \times n}$ as described in~\eqref{eq:dependent_on_0} and~\eqref{eq:dependent_on_nz} for the case $\pz = \pnz =p$. Then, the {\master} splits the padded matrix $\bfA + \bfR$ row-wise into $N_1$ sub-matrices $\left[(\bfA+\bfR)_1^\transpose, (\bfA+\bfR)_2^\transpose,\dots,(\bfA+\bfR)_{N_1}^\transpose\right]^\transpose$, where $(\bfA+\bfR)_i \in \F_q^{\frac{m}{N_1}\times n}$ for $i \in [N_1]$. Similarly, the {\master} splits $\bfR$ into $N_2$ submatrices $\bfR = \left[\bfR_1^\transpose, \bfR_2^\transpose,\dots,\bfR_{N_2}^\transpose\right]^\transpose$, where $\bfR_k \in \F_q^{\frac{m}{N_2}\times n}$ for $k \in [N_2]$. Then, the {\master} %
creates $N_1$ tasks $\bfT_{i, 1}^u = (\bfA + \bfR)_i$ each assigned to {\worker} $w_i^u$ for $i\in [N_1]$. The collection of those tasks is referred to as the first \emph{layer} of tasks.
A similar first layer of tasks $\bfT_{k, 1}^t = \bfR_k$ for $k \in [N_2]$ is created and assigned to {\workers} of  the partly trusted cluster, cf., Fig.~\ref{fig:fractional_repetition_sparse}. %
The {\master} creates %
the tasks of the other layers %
as 
\begin{align*}
\bfT^u_{(i \mod N_1)+1,j} &= \bfT^u_{i,j-1},\quad i \in [N_1], j \in [\alpha'] \setminus \{1\},\\
\bfT^t_{(k \mod N_2)+1, b} &= \bfT^t_{k, b-1},\quad k \in [N_2], b \in [\alpha] \setminus \{1\}.
\end{align*}
The {\master} publishes $\bfx$ to all the {\workers} and assigns to {\worker} $w_i^u$, $i\in[N_1]$, the tasks $\{\bfT^u_{i,j}\}_{j=1}^{\alpha'}$ and to {\worker} $w_k^t$, $k\in[N_2]$, the tasks $\{\bfT^u_{k,b}\}_{b=1}^{\alpha}$. Each {\worker} multiplies its assigned tasks by $\bfx$ and sends the results back to the {\master}.

\subsection{Analysis of the scheme}\label{subsec:privacy_scheme}
\begin{theorem}\label{thm:main}
Using the task creation strategy explained above, the {\workers} of the untrusted cluster learn nothing about the input matrix $\bfA$, i.e., $\mutinf(\{\rvT_{i,j}^u\}_{i\in[N_1],j\in[\alpha']};\rvA) = 0.$
The leakage to the colluding {\workers} of the partly trusted cluster about the input matrix $\bfA$ is quantified by 
{\begin{equation}\label{eq:colluding_trusted}
    \mutinf(\{\rvT^t_{i,j}\}_{i\in\cZ, j\in [\alpha]};\rvA) = \min\left\{\dfrac{\alpha z}{N_2}, 1\right\}\cdot m \cdot n \cdot \leakage{2}.
\end{equation}}
As a result, the maximum sparsity allowed for the assigned tasks is given by
\begin{align*}
    \sparsity(\bfA+\bfR) &= p^\star, \quad \text{ and } \quad\sparsity(\bfR) = p^* \frac{sq-1}{q-1}+\frac{1-s}{q-1},
\end{align*}
where $p^* = \min_{\mutinf(\{\rvT^t_{i,j}\}_{i\in\cZ, j\in [\alpha]};\rvA)\leq \varepsilon} p$.
The {\master} can obtain the desired computation after receiving any $K^u \triangleq \frac{-\alpha'^2 +\alpha'(2N_1-1)}{2} +1$ and any $K^t \triangleq \frac{-\alpha^2 +\alpha(2N_2-1)}{2} +1$ responses from the {\workers} of the untrusted and partly trusted clusters, respectively. Thus allowing a tolerance of partial stragglers. In case of full stragglers, the {\master} can tolerate up to $\alpha - 1$ and $\alpha' - 1$ stragglers from the respective clusters.
\end{theorem}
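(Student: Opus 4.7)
The plan is to split the theorem into four self-contained pieces and settle each with the tools already at hand: (i) perfect privacy against the untrusted cluster, (ii) the colluders' leakage expression~\eqref{eq:colluding_trusted}, (iii) the sparsity formulas at the optimal $p^\star$, and (iv) the decoding threshold with straggler tolerance.

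Piece (i) is immediate: each untrusted task $\bfT^u_{i,j}$ is a deterministic row-block of $\bfA+\bfR$, and Observation~\ref{obs:private} gives $\mutinf(\rvA+\rvR;\rvA)=0$ when $\pz=\pnz=p$; the data processing inequality then yields $\mutinf(\{\rvT^u_{i,j}\}_{i,j};\rvA)=0$. For piece (iii), the sparsity formulas follow by specialising Lemma~\ref{lemma:sparsity} at $\pz=\pnz=p^\star$, and Lemma~\ref{lemma:R} implies that the leakage bound in~\eqref{eq:colluding_trusted} is strictly increasing in $p$ on $(q^{-1},1]$, so $p^\star$ is the unique largest $p$ at which that expression equals $\varepsilon$.

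Piece (ii) is the main technical step. Under the cyclic assignment, worker $w_k^t$ holds the row-blocks $\{\bfR_{k-j+1 \bmod N_2}\}_{j=1}^{\alpha}$, so any $z$ colluders jointly observe at most $\min(\alpha z,N_2)$ distinct row-blocks, with equality when their windows are spaced by $\alpha$ (the worst case targeted by the theorem). Let $\cS\subseteq[m]\times[n]$ index the revealed entries of $\bfR$, so that $|\cS|=\min(\alpha z/N_2,1)\cdot m n$. Because the pairs $(\rvA_{i,j},\rvR_{i,j})$ are i.i.d.\ over $(i,j)$, the family $\{\rvR_{i,j}\}_{(i,j)\in\cS}$ is independent of $\{\rvA_{i,j}\}_{(i,j)\notin\cS}$; combining this with the chain rule collapses the joint mutual information to $|\cS|\cdot \leakage{2}$, which is precisely~\eqref{eq:colluding_trusted}.

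For piece (iv), I would model a partial straggler as a worker that processes its tasks in layer order and stops at some layer, so that worker $i$ returns the window $\{(\bfA+\bfR)_{i},\dots,(\bfA+\bfR)_{i-k_i+1}\}$ for some $k_i\in\{0,\dots,\alpha'\}$. To keep a row-block $(\bfA+\bfR)_{j}$ uncovered, $k_i$ is capped by the cyclic distance $(i-j)\bmod N_1$, so the maximum number of responses the master can collect while still missing $j$ is $\sum_i \min((i-j)\bmod N_1,\alpha')=\alpha'N_1-\alpha'(\alpha'+1)/2$, giving the stated $K^u$; the analogous argument yields $K^t$. Once all row-blocks of $\bfA+\bfR$ and $\bfR$ are in hand, the master assembles $(\bfA+\bfR)\bfx$ and $\bfR\bfx$ and subtracts to obtain $\bfA\bfx$, and the full-straggler tolerances $\alpha'-1$ and $\alpha-1$ follow because each row-block is replicated $\alpha'$ (resp.\ $\alpha$) times. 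The most delicate part is (ii): I must verify that the worst-case colluder choice actually realises the claimed equality, and that entry-wise independence of $\bfR$ conditional on $\bfA$ genuinely forbids the positions of $\cS$ from carrying additional information about $\bfA$ beyond the $|\cS|\cdot\leakage{2}$ contribution.
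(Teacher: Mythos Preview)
Your proposal is correct and follows essentially the same line as the paper's proof: independence of the entries of $(\rvA,\rvR)$ plus Observation~\ref{obs:private} for part~(i), the observation that any $z$ colluders see at most $\min(\alpha z,N_2)$ distinct row-blocks of $\bfR$ for part~(ii), Lemmas~\ref{lemma:sparsity} and~\ref{lemma:R} for part~(iii), and the worst-case counting argument for part~(iv). Your appeal to the data processing inequality in~(i) and the compact sum $\sum_i\min((i-j)\bmod N_1,\alpha')$ in~(iv) are minor presentational variants of the paper's chain-rule decomposition and its equivalent expression $(N_1-\alpha')\alpha'+\sum_{u=1}^{\alpha'}(u-1)$, respectively.
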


\begin{proof}%
\emph{Privacy:} We start by proving that $\mutinf(\{\rvT_{i,j}^u\}_{i\in[N_1],j\in[\alpha']};\rvA) = 0$. Recall that for the first layer of tasks, $\bfT_{i,1}^u = (\bfA + \bfR)_i$, $i\in [N_1]$. Those tasks are independent from each other, cf., the proof of Lemma~\ref{lemma:sparsity}. This holds because by construction, the entries of the padding matrix $\bfR$ are generated independently from each other. Generating entry $\bfRij$ depends only on $\bfAij$ which is also assumed to be independent from the other entries of $\bfA$. Hence, 
\begin{align*}
    \mutinf(\rvT_{1,1}^u,\dots,\rvT^u_{N_1,1};\rvA) & = \sum_{i=1}^{N_1}\mutinf(\rvT^u_{i,1};\rvA)
    = 0.%
\end{align*}
The last equality holds from Observation~\ref{obs:private}. The tasks allocated in the other layers $\{\bfT^u_{i, j}\}_{i \in [N_1], 2\leq j \leq \alpha'}$ do not add any information to the {\workers} about $\bfA$ since they are copies of $\bfT_{i,1}^u$, i.e., using the chain rule of mutual information
\begin{align*}
    \mutinf(\{\rvT_{i,j}^u\}_{i\in[N_1],j\in[\alpha']};\rvA) & = \mutinf(\{\rvT^u_{i,1}\}_{i\in [N_1]};\rvA)\\
    & \hspace{-1.5cm} + \mutinf(\{\rvT_{i,j}^u\}_{i \in [N_1], 2\leq j \leq \alpha'};\rvA|\{\rvT^u_{i,1}\}_{i\in [N_1]})\\
    & =  0.
\end{align*}
We now prove~\eqref{eq:colluding_trusted}. By construction, the set of tasks sent to any $z$ {\workers} of the partly trusted clusters $\{\bfT^t_{i,j}\}_{i \in \cZ, j\in [\alpha]}, \cZ \subseteq [N_2] ,\lvert \cZ \rvert = z$  can include at most $\min\{\alpha z, N_2\}$ unique leaking tasks $\bfR_i, i \in [N_2]$. The set of tasks allocated in the first layer $\{\bfT^t_{i, 1}\}_{i \in [N_2]}$ consists of unique elements (due to the cyclic shifted assignment), hence we can write 
\begin{align}
    \mutinf(\{\rvT^t_{i,j}\}_{i\in \cZ, j\in[\alpha]}; \rvA) &= \mutinf(\rvT^t_{1,1}, \rvT^t_{2,1},\dots, \rvT^t_{\min{\{\alpha z, N_2\}},1}; \rvA) \nonumber \\
    &=\sum_{i=1}^{\min{\{\alpha z, N_2\}}}\mutinf(\rvT^t_{i,1};\rvA)\label{eq:proof_theorem_1} \\
    &= \min{\left\{\dfrac{\alpha z}{N_2}, 1\right\}} m n  \leakage{2}, \label{eq:proof_theorem_2}
\end{align}
where \eqref{eq:proof_theorem_1} holds since the tasks $\bfT_{i,1}^t$ are independent from each other and~\eqref{eq:proof_theorem_2} holds since the entries of the padding matrices $\bfRij$ are independent from each other. The leakage $\leakage{2}$ is quantified in~\eqref{eq:leakage_from_R} for the case $\pz=\pnz$.

\emph{Sparsity:} Given a desired privacy level $\varepsilon$, the only constraint this scheme has is $\mutinf(\{\rvT^t_{i,j}\}_{i\in\cZ, j\in [\alpha]};\rvA) \leq \varepsilon$. Since the sparsity levels of both $\bfR$ and $\bfA+\bfR$ are increasing in $p$, then the scheme gives the best sparsity guarantee by maximizing $p^*$ subject to the desired privacy constraint.

To better show the trade-off between sparsity and privacy, we define the relative leakage $\bar{\varepsilon} = \frac{\varepsilon}{\entropy(\rvA)} = \frac{\varepsilon}{mn\entropy(\Aij)}$. Hence, we have 
\begin{equation}\label{eq:individual_leakage}
    \leakage{2} \leq \dfrac{\bar{\varepsilon}\entropy(\Aij)}{\min\{\frac{\alpha z}{N_2},1\}}.
\end{equation}
We plot in Fig.~\ref{fig:my_label} the allowed sparsity $p^*$ as a function of $z$ and the privacy constraint. Observe that $p^*$ decreases with $z$ and the desired privacy and so does the obtained sparsity.

\begin{figure}[t]
    \centering
    \resizebox{.48\textwidth}{!}{\input{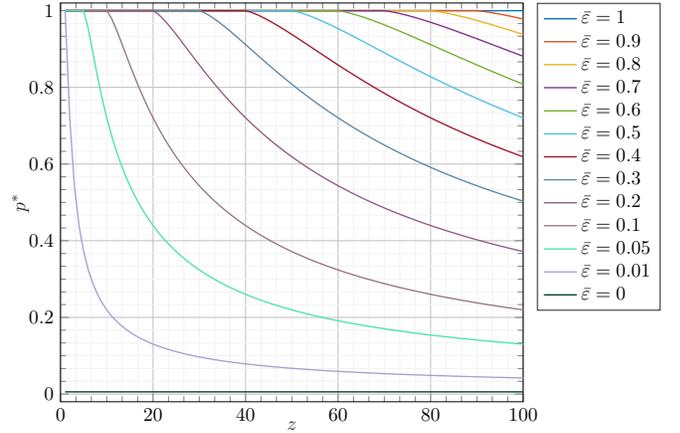}}
    \caption{Trade-off sparsity versus privacy. We take $N_2 = 100$, $\alpha = 1$, $s =\sparsity(\bfA)=0.93$ and consider the finite field $\F_{256}$. The value of $p^*$ is the maximum value of $p$ for which~\eqref{eq:individual_leakage} holds. The value of $\bar{\varepsilon}$ reflects the privacy requirement (see the proof of Theorem~\ref{thm:main}). The value $\bar{\varepsilon} = 0$ corresponds to perfect privacy and requires $p^* = q^{-1}$ for all $ z \in [N_2]$, whereas $\bar{\varepsilon} = 1$ allows our scheme to leak as much as a non-private scheme, hence allowing the choice $p^* = 1$ for all $ z \in [N_2]$. Recall that $\sparsity(\bfA + \bfR) = p^*$ and for this choice of $s$ and $q$ we have $\sparsity(\bfR) \approx 0.93p^*$. Hence, for $0 < \bar{\epsilon} < 1$, the increase of $z$ and/or the {decrease of} $\bar{\varepsilon}$ reduces the achievable sparsity level of $\bfR$ and $\bfA+\bfR$. 
    }
    \label{fig:my_label}
\end{figure}

\emph{Straggler tolerance:} {The {\workers} are assumed to run the computations sequentially. Each {\worker} starts by computing the task of layer $1$ and proceeds to the next layer after completing this task. We use the same proof technique as in~\cite{coded_sparse_matrix_leverages_partial_stragglers}. Without loss of generality we will analyse the minimum number of responses needed from the partly trusted cluster. The worst case scenario is when only one task $\bfR_k\bfx$ for $k \in [N_2]$ is not retrieved from any of the {\workers}. The task $\bfR_k\bfx$ is only assigned to $\alpha$ {\workers}. Assume that all $N_2 - \alpha$ {\workers} that cannot compute $\bfR_k\bfx$ finished their computations. The remaining $\alpha$ {\workers} have $\bfR_k$ at layers $1,2,\dots, \alpha$, respectively. Therefore, the remaining $\alpha$ {\workers} can compute tasks different than $\bfR_k\bfx$ by performing at most $\sum_{u =1}^{\alpha} (u-1)$ redundant computations. Note that after having any $(N_2-\alpha)\alpha + \sum_{u =1}^{\alpha} (u-1) + 1$ computations the {\master} obtains $\bfR_k\bfx$ and therefore concludes the computation of $\bfR\bfx$. By simplyfing the equation we say that the {\master} needs to receive any $K^t \triangleq \frac{-\alpha^2 +\alpha(2N_2-1)}{2} +1$ responses from the {\workers} of the partly trusted cluster to obtain $\bfR\bfx$. Similarly, any  $K^u \triangleq \frac{-\alpha'^2 +\alpha'(2N_1-1)}{2} +1$ responses from the {\workers} of the untrusted cluster are enough to reconstruct $(\bfA+\bfR)\bfx$. Afterwards, the {\master} can obtain $\bfy = \bfA\bfx = (\bfA+\bfR)\bfx - \bfR\bfx$.} The statement about full stragglers holds by construction. %
\end{proof}
\section{Conclusion and Future Directions}\label{sec:future_direction}
{We investigated private distributed matrix-vector multiplication schemes for sparse matrices that assign sparse computations to the {\workers} at the expense of guaranteeing weak information-theoretic privacy. We focused on the {\master}/{\worker} setting with two non-communicating clusters of {\workers}, where perfect privacy is required in one cluster and weak privacy in the other is satisfactory. For this setting we introduce a coding strategy that trades privacy for sparsity. By coupling the introduced coding strategy with fractional repetition codes we construct a sparse and private coded computing scheme that tolerates stragglers.}

{We are currently investigating the general {\master}/{\worker} setting with one cluster of {\workers}. The total leakage can be further decreased at a small loss in the average sparsity of the tasks. We suggest to choose $\pz \neq \pnz$ and study the resulting trade-off.}

\balance
\bibliographystyle{ieeetr}
\bibliography{IEEEabrv,main}

\begin{thebibliography}{10}

\bibitem{Deanetal}
J.~Dean and L.~A. Barroso, ``The {T}ail at {S}cale,'' {\em Communications of
  the ACM}, vol.~56, pp.~74--80, 2013.

\bibitem{speeding_up_using_codes}
K.~{Lee}, M.~{Lam}, R.~{Pedarsani}, D.~{Papailiopoulos}, and K.~{Ramchandran},
  ``Speeding {U}p {D}istributed {M}achine {L}earning {U}sing {C}odes,'' {\em
  IEEE Transactions on Information Theory}, vol.~64, no.~3, pp.~1514--1529,
  2018.

\bibitem{simple_demographics}
L.~Sweeney, ``{Simple Demographics Often Identify People Uniquely},'' Jun 2018.

\bibitem{suykens1999least}
J.~A. Suykens and J.~Vandewalle, ``Least squares support vector machine
  classifiers,'' {\em Neural processing letters}, vol.~9, no.~3, 1999.

\bibitem{seber2012linear}
G.~A. Seber and A.~J. Lee, {\em Linear regression analysis}, vol.~329.
\newblock John Wiley \& Sons, 2012.

\bibitem{Albin}
A.~Severinson, A.~G. i~Amat, and E.~Rosnes, ``{Block-Diagonal Coding for
  Distributed Computing With Straggling Servers},'' {\em CoRR},
  vol.~abs/1701.06631, 2017.

\bibitem{GauriJoshi}
A.~Mallick, M.~Chaudhari, and G.~Joshi, ``{Rateless Codes for Near-Perfect Load
  Balancing in Distributed Matrix-Vector Multiplication},'' {\em CoRR},
  vol.~abs/1804.10331, 2018.

\bibitem{Convolutional_codes}
A.~B. Das, A.~Ramamoorthy, and N.~Vaswani, ``{Random Convolutional Coding for
  Robust and Straggler Resilient Distributed Matrix Computation},'' {\em CoRR},
  vol.~abs/1907.08064, 2019.

\bibitem{li2020coded}
S.~Li and S.~Avestimehr, ``Coded {C}omputing,'' {\em Foundations and
  Trends{\textregistered} in Communications and Information Theory}, vol.~17,
  no.~1, 2020.

\bibitem{wright2008robust}
J.~Wright, A.~Y. Yang, A.~Ganesh, S.~S. Sastry, and Y.~Ma, ``Robust face
  recognition via sparse representation,'' {\em IEEE transactions on pattern
  analysis and machine intelligence}, vol.~31, no.~2, 2008.

\bibitem{adaptive_sparse_matrix}
P.~Zardoshti, F.~Khunjush, and H.~Sarbazi-Azad, ``{Adaptive Sparse Matrix
  Representation for Efficient Matrix---Vector Multiplication},'' {\em J.
  Supercomput.}, vol.~72, p.~3366–3386, sep 2016.

\bibitem{implementing_sparse_matrix_vector}
N.~Bell and M.~Garland, ``{Implementing Sparse Matrix-Vector Multiplication on
  Throughput-Oriented Processors},'' in {\em Proceedings of the Conference on
  High Performance Computing Networking, Storage and Analysis}, SC '09, (New
  York, NY, USA), Association for Computing Machinery, 2009.

\bibitem{coded_sparse_mm}
S.~Wang, J.~Liu, and N.~Shroff, ``Coded {S}parse {M}atrix {M}ultiplication,''
  in {\em Proceedings of the 35th International Conference on Machine
  Learning}, pp.~5152--5160, PMLR, 2018.

\bibitem{coded_sparse_matrix_leverages_partial_stragglers}
A.~B. Das and A.~Ramamoorthy, ``Coded {S}parse {M}atrix {C}omputation {S}chemes
  that {L}everage {P}artial {S}tragglers,'' in {\em 2021 IEEE International
  Symposium on Information Theory (ISIT)}, pp.~1570--1575, 2021.

\bibitem{Staircase}
R.~Bitar, P.~Parag, and S.~El~Rouayheb, ``{Minimizing Latency for Secure
  Distributed Computing},'' {\em IEEE International Symposium on Information
  Theory (ISIT)}, 2017.

\bibitem{PRAC}
R.~Bitar, Y.~Xing, Y.~Keshtkarjahromi, V.~Dasari, S.~El~Rouayheb, and
  H.~Seferoglu, ``Private and rateless adaptive coded matrix-vector
  multiplication,'' {\em EURASIP Journal on Wireless Communications and
  Networking}, vol.~2021, no.~1, 2021.

\bibitem{RPM3}
R.~Bitar, M.~Xhemrishi, and A.~Wachter{-}Zeh, ``{Adaptive Private Distributed
  Matrix Multiplication},'' {\em CoRR}, vol.~abs/2101.05681, 2021.

\bibitem{Kakar}
J.~Kakar, S.~Ebadifar, and A.~Sezgin, ``{Rate-Efficiency and
  Straggler-Robustness through Partition in Distributed Two-Sided Secure Matrix
  Computation},'' {\em CoRR}, vol.~abs/1810.13006, 2018.

\bibitem{Burak_private}
B.~Hasircioglu, J.~G{\'{o}}mez{-}Vilardeb{\'{o}}, and D.~G{\"{u}}nd{\"{u}}z,
  ``{Speeding Up Private Distributed Matrix Multiplication via Bivariate
  Polynomial Codes},'' {\em CoRR}, vol.~abs/2102.08304, 2021.

\bibitem{Bivarite_non_private}
B.~Hasircioglu, J.~Gómez-Vilardebó, and D.~Gündüz, ``Bivariate {Polynomial
  Coding for Efficient Distributed Matrix Multiplication},'' {\em IEEE Journal
  on Selected Areas in Information Theory}, vol.~2, no.~3, pp.~814--829, 2021.

\bibitem{Anton_Alex}
A.~Frigård, S.~Kumar, E.~Rosnes, and A.~G. i~Amat, ``{Low-Latency Distributed
  Inference at the Network Edge Using Rateless Codes (Invited Paper)},'' in
  {\em 2021 17th International Symposium on Wireless Communication Systems
  (ISWCS)}, pp.~1--6, 2021.

\bibitem{FLT_codes}
A.~K. Pradhan, A.~Heidarzadeh, and K.~R. Narayanan, ``Factored {LT} and
  {F}actored {R}aptor {C}odes for {L}arge-{S}cale {D}istributed {M}atrix
  {M}ultiplication,'' {\em IEEE Journal on Selected Areas in Information
  Theory}, vol.~2, no.~3, pp.~893--906, 2021.

\bibitem{MatDot}
S.~Dutta, M.~Fahim, F.~Haddadpour, H.~Jeong, V.~R. Cadambe, and P.~Grover,
  ``{On the Optimal Recovery Threshold of Coded Matrix Multiplication},'' {\em
  CoRR}, vol.~abs/1801.10292, 2018.

\bibitem{GASP}
R.~G.~L. D’Oliveira, S.~El~Rouayheb, and D.~Karpuk, ``{GASP Codes for Secure
  Distributed Matrix Multiplication},'' {\em IEEE Transactions on Information
  Theory}, vol.~66, no.~7, pp.~4038--4050, 2020.

\bibitem{On_the_capacity}
W.-T. Chang and R.~Tandon, ``{On the Capacity of Secure Distributed Matrix
  Multiplication},'' in {\em 2018 IEEE Global Communications Conference
  (GLOBECOM)}, pp.~1--6, 2018.

\bibitem{batch}
Q.~Yu and A.~S. Avestimehr, ``{Entangled Polynomial Codes for Secure, Private,
  and Batch Distributed Matrix Multiplication: Breaking the "Cubic" Barrier},''
  {\em CoRR}, vol.~abs/2001.05101, 2020.

\bibitem{LCC}
Q.~Yu, N.~Raviv, J.~So, and A.~S. Avestimehr, ``{Lagrange Coded Computing:
  Optimal Design for Resiliency, Security and Privacy},'' {\em CoRR},
  vol.~abs/1806.00939, 2018.

\bibitem{LT_codes}
M.~Luby, ``L{T} codes,'' in {\em The 43rd Annual IEEE Symposium on Foundations
  of Computer Science, 2002. Proceedings.}, pp.~271--280, 2002.

\bibitem{Lagrange_sparse_coding}
M.~Fahim and V.~R. Cadambe, ``{Lagrange Coded Computing with Sparsity
  Constraints},'' in {\em 2019 57th Annual Allerton Conference on
  Communication, Control, and Computing (Allerton)}, pp.~284--289, 2019.

\bibitem{Matrix_sparsification}
G.~Suh, K.~Lee, and C.~Suh, ``Matrix {S}parsification for {C}oded {M}atrix
  {M}ultiplication,'' in {\em 2017 55th Annual Allerton Conference on
  Communication, Control, and Computing (Allerton)}, pp.~1271--1278, 2017.

\bibitem{shannon_one_time_pad}
C.~E. Shannon, ``Communication theory of secrecy systems,'' {\em The Bell
  System Technical Journal}, vol.~28, no.~4, pp.~656--715, 1949.

\bibitem{cover2012elements}
T.~Cover and J.~Thomas, {\em Elements of Information Theory}.
\newblock Wiley, 2012.

\bibitem{el2010fractional}
S.~El~Rouayheb and K.~Ramchandran, ``{Fractional Repetition Codes for Repair in
  Distributed Storage Systems},'' in {\em 2010 48th Annual Allerton Conference
  on Communication, Control, and Computing (Allerton)}, pp.~1510--1517, IEEE.

\end{thebibliography}

\end{document}